\newtheorem{theorem}{Theorem}
\newtheorem{definition}{Definition}
\newtheorem{lemma}{Lemma}
\newtheorem{proposition}{Proposition}
\newcommand{\set}[1]{\mathcal{#1}}
\newcommand{\Real}{{\mathbb{R}}}
\newcommand{\nequal}[1]{\stackrel{#1}{=}}
\newcommand{\graph}{\set{G}}
\newcommand{\nodes}{\set{P}}
\newcommand{\edges}{\set{E}}
\newcommand{\sourceLocation}{O}
\newcommand{\multicastRequirement}{M}
\DeclareMathOperator{\multicastProblem}{{\mathsf{T}}}
\newcommand{\destinationLocation}{\set{D}} 
\newcommand{\inputRate}{{\lambda}}
\newcommand{\edgeRate}{{\omega}}
\newcommand{\networkCoding}{{\Phi}} 
\newcommand{\sessions}{\set{S}}
\newcommand{\inputRV}{{T}}
\newcommand{\edgeRV}{{W}}
\def\edgeAlphabet{{w}}
\def\normal{r}
\def\groundset{\set{M}}
\def\espace{H}
\def\sRV{{S}}
\newcommand{\N}{\set{N}}
\title{Mission impossible: \\Computing the network coding capacity region}
\author{\authorblockN{Terence Chan and Alex Grant}\\
\authorblockA{Institute for Telecommunications Research\\
University of South Australia, Australia \\
{\tt \{terence.chan, alex.grant\}@unisa.edu.au }}}
\begin{document}
\maketitle

\begin{abstract}
  One of the main theoretical motivations for the emerging area of
  network coding is the achievability of the max-flow/min-cut rate for
  single source multicast. This can exceed the rate achievable with
  routing alone, and is achievable with linear network codes. The
  multi-source problem is more complicated.  Computation of its
  capacity region is equivalent to determination of the set of all
  entropy functions $\Gamma^*$, which is non-polyhedral. The aim of
  this paper is to demonstrate that this difficulty can arise even in
  single source problems. In particular, for single source networks
  with hierarchical sink requirements, and for single source networks
  with secrecy constraints. In both cases, we exhibit networks whose
  capacity regions involve $\Gamma^*$.  As in the multi-source case,
  linear codes are insufficient.
\end{abstract}

\section{Introduction}
Network coding~\cite{AhlCai00,LiYeu03} generalizes
routing by allowing intermediate nodes to perform coding operations
which combine received data packets.  One of the most celebrated
benefits of this approach is increased throughput in multicast
scenarios. This stimulated much of the early research in the
area. 
%
One fundamental problem in network coding is to understand the
capacity region and the classes of codes that achieve capacity. In the
single session multicast scenario, the problem is well understood. In
particular, the capacity region is characterized by max-flow/min-cut
bounds and linear network codes are sufficient to achieve maximal
throughput \cite{LiYeu03,DouFre05}. Network coding not only yields a
throughput advantage over routing, its capacity can be easily
determined, and easily achieved. This is in stark contrast to routing,
where computation of the capacity region and of optimal routes is
fundamentally difficult.

Significant practical and theoretical complications arise in more
general multicast scenarios, involving more than one session. An
expression for the capacity region is known~\cite{YanYeu07}, however
it is given by the intersection of a set of hyperplanes (specified by
the network topology and connection requirement) and the set of
entropy functions $\Gamma^*$.  Unfortunately, this capacity region, or
even the inner and outer bounds~\cite{SonYeu03,Yeu02,YeuLi06} cannot
be computed in practice, due to the lack of an explicit
characterization of the set of entropy functions for more than three
random variables. This difficulty is not simply a consequence of the
particular formulation of the capacity region given
in~\cite{YanYeu07}. It was recently shown that the problem of
determining the capacity region for the multi-source problem is in
fact entirely equivalent to the determination of $\bar{\Gamma}^*$, the
set of almost entropic functions
\cite{Chan.Grant07dualities}. Furthermore, the non-polyhedral nature
of $\bar{\Gamma}^*$, revealed in~\cite{Mat07} implies a non-polyhedral
capacity region (in contrast to the max-flow result for single
sources).  To make things even worse, it is also known that linear
network codes are not sufficient for the multi-source
problem~\cite{DouFre05,Chan.Grant07dualities}.

In this paper, we show that non-polyhedral capacity regions can occur
even in \emph{single source} scenarios. We demonstrate this phenomenon
for single source networks with hierarchical sink constraints, and for
single source networks with security constraints. Our approach is in
the spirit of our recent work \cite{Chan.Grant07dualities}, which
revealed a deep duality between network codes and entropy functions.
%
%
Direct consequences are
non-polyhedral capacity regions, the insufficiency of linear network
codes and the importance of non-Shannon information inequalities. 


Section \ref{sec:background} provides the basic setup for secure
network codes, and formally defines achievability and admissibility
for networks with wiretapping adversaries. Section \ref{sec:incr-mult}
focuses on the single source incremental multicast scenario, in which
the sinks have hierarchical requirements. Given a function $g$, we
construct an incremental multicast network that is solvable if and
only if $g$ is entropic. In Section \ref{sec:secure-multicast} we
construct a special single source secure multicast problem which is
equivalent to an insecure multi-source multicast problem. Invoking the
duality results from~\cite{Chan.Grant07dualities} these constructions
relate the solvability of both single-source incremental multicast and
single source secure multicast, to multi-source multicast problems.

\section{Background}\label{sec:background} 
 
The network topology will be modeled by a directed
acyclic graph $\set{G} = (\nodes, \set{E})$.  Vertices $u\in\nodes$
correspond to communication nodes and directed edges $e\in\set{E}$
are error-free point-to-point communication links. The
\emph{connection requirement}
 $\multicastRequirement \triangleq (\set{S},
\sourceLocation, \set{D})$ is specified by three components. The set
$\set{S}$ indexes the independent multicast sessions, each of which is
a collection of packets to be multicast to a prescribed set of
destinations.  The session-source location mapping
$\sourceLocation:\set{S}\mapsto\nodes$ specifies the originating node
$\sourceLocation(s)$ for session $s$.  The receiver-location mapping
$\set{D}:\set{S}\mapsto 2^{\nodes}$ indicates the set of nodes
$\set{D}(s)\subseteq\nodes$ which require the data of session $s$.



A \emph{network code} is identified by a set of discrete random
variables $\{T_{\set{S}}, W_{\set{E}}\}$, defined on finite sample
spaces, where for concise notation, set-valued subscripts denote a set
of objects indexed by the set, e.g. $Z_\set{X} =
\{Z_i,i\in\set{X}\}$. The source random variables $T_s, s\in\set{S}$
are mutually independent and are uniformly distributed on sample
spaces whose size will be denoted $|T_s|$. The variables $W_e,
e\in\set{E}$ are the messages transmitted over link $e$.

Since the network is acyclic, variables in $T_\set{S}$ and
$W_{\set{E}}$ can be ancestrally ordered according to the network
topology. Causal coding requires that edge messages are conditionally
independent of their non-incident ancestral messages given their
incident source and message variables.

\begin{definition}
  A network code is \emph{probabilistic} if there exists an outgoing
  link message which is not a function of the incoming source and link
  messages. Otherwise, it is \emph{deterministic}.
\end{definition} 

Probabilistic network codes can be implemented via using 
independent random variables $V_u$ (internal randomness) at each node
$u\in\nodes$ such that all outgoing messages from a node are
deterministic functions of incoming sources and link messages and the
independent randomness generated at the node. It is easy to prove that
all probabilistic network codes can be implemented in this
way. Accordingly, we shall specify a probabilistic network code by the
set $\{T_{\set{S}}, W_{\set{E}}, V_{\nodes}\}$.
\begin{lemma}\label{lemm:probcodeproperty}
  Given random variables $X_1,X_2$ and $V$, if $V$ is independent of
  $X_1$ and $X_2$, and $X_2$ is a function of $X_1$ and $V$, then
  $X_2$ is a function of $X_1$ alone
\end{lemma}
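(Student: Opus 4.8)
The plan is to first pin down the correct reading of the hypothesis and then reduce the claim to a one-line mutual-information identity. I would note that ``$V$ is independent of $X_1$ and $X_2$'' must be understood as joint independence $V \indep (X_1, X_2)$, not merely pairwise independence. Pairwise independence is genuinely insufficient: if $X_1$ and $V$ are independent uniform bits and $X_2 = X_1 \oplus V$, then $V$ is independent of each of $X_1$ and $X_2$ separately, yet $X_2$ is plainly not a function of $X_1$ alone. Recognizing that joint independence is the operative assumption is really the only subtlety here; everything after that is routine.

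Given joint independence, I would translate the goal into entropy language: ``$X_2$ is a function of $X_1$'' is equivalent to $H(X_2 \mid X_1) = 0$, and the hypothesis that $X_2$ is a function of $X_1$ and $V$ gives $H(X_2 \mid X_1, V) = 0$. The chain rule then yields
\[ H(X_2 \mid X_1) = H(X_2 \mid X_1, V) + I(X_2; V \mid X_1) = I(X_2; V \mid X_1), \]
so it suffices to show $I(X_2; V \mid X_1) = 0$.

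To close the argument I would expand the joint-independence hypothesis by the chain rule for mutual information: $0 = I(V; X_1, X_2) = I(V; X_1) + I(V; X_2 \mid X_1)$. Both summands are nonnegative, hence each vanishes; in particular $I(X_2; V \mid X_1) = 0$ by symmetry of mutual information. Combined with the previous display this gives $H(X_2 \mid X_1) = 0$, which is exactly the claim.

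An equivalent elementary route avoids entropy entirely and may read more transparently: writing $X_2 = f(X_1, V)$ and fixing any $x_1$ and any $v$ of positive probability, the conditioning on the event $\{X_1 = x_1, V = v\}$ forces $X_2 = f(x_1,v)$, and joint independence then yields $P(X_1 = x_1, X_2 = f(x_1,v)) = P(X_1 = x_1)$. Thus, conditioned on $X_1 = x_1$, the value of $X_2$ equals $f(x_1,v)$ almost surely for \emph{every} admissible $v$, so $f(x_1, \cdot)$ is constant and $X_2$ is determined by $X_1$. Either way, the main obstacle is conceptual rather than technical: once the joint-versus-pairwise distinction is settled, the proof is a two-step calculation with no remaining difficulty.
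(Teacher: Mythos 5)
Your proof is correct. The paper states Lemma~\ref{lemm:probcodeproperty} without any proof, so there is no argument to compare against; your entropy derivation, $H(X_2\mid X_1) = I(X_2;V\mid X_1) \le I(V;X_1,X_2) = 0$, is the natural (and presumably intended) one, and your elementary probability-mass argument is an equally valid alternative that avoids information measures entirely. The one substantive point you raise deserves emphasis: as literally worded, ``$V$ is independent of $X_1$ and $X_2$'' admits a pairwise reading under which the lemma is \emph{false} --- your example with independent uniform bits and $X_2 = X_1\oplus V$ satisfies $V\indep X_1$ and $V\indep X_2$ separately yet $X_2$ is not a function of $X_1$ --- and the claim only holds under the joint reading $V\indep(X_1,X_2)$. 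That stronger hypothesis is what actually obtains where the paper invokes the lemma (internal randomness at a node is generated independently of all sources, hence jointly independently of any collection of edge messages that are deterministic functions of the sources), so the application is sound, but the statement itself would benefit from being made unambiguous.
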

The implication of the lemma is as follows. At the sinks (or any
intermediate node) of the network, if reconstruction of the source
messages is possible, then it can also be achieved in the absence of
``internal randomness''.  In fact, in the absence of security
constraints, it is known that deterministic network codes are
sufficient \cite{Yeu02}. This is not always the case for the
wiretapping scenarios considered in Section
\ref{sec:secure-multicast}.

In addition to legitimate sinks, there are $|\set{R}|$ adversaries,
which can eavesdrop any message transmitted along a given collection
of links.
Each adversary attempts to reconstruct a particular set of source
messages, according to a wiretapping pattern.
\begin{definition}[Wiretapping pattern]
  The \emph{wiretapping pattern} is specified by a collection of tuples
  $(\set{A}_r, \set{B}_r)$ for $r\in\set{R}$ such that $\set{A}_r
  \subseteq \set{S}$ is the subset of sources to be reconstructed by
  adversary $r$, which observes only the links in $\set{B}_r$.
\end{definition}

For a given network code designed with respect to a connection
requirement $M$, define $P_e$ as the error probability that at least
one receiver fails to correctly reconstruct one or more of its
requested source messages.  A \emph{zero-error network code} is one
for which $P_e = 0$, and hence the source messages $T_{\set{S}}$ can
be perfectly reconstructed at desired sinks. The goal of secure
communications is to transmit information such that any eavesdropper
listening to the traffic on all the links in $\set{B}_r$ remains
``ignorant'' of the data transmitted by the sources in $\set{A}_r$. A
\emph{perfectly secure} network code is one for which the information
leakage ${I\left(T_{\set{A}_r} ; W_{\set{B}_r}\right)} = 0$ for all
$r\in\set{R}$.

\begin{definition}[Admissible rate-capacity tuple]\label{df:admissible}
  Given a network $\graph=(\nodes,\edges)$ and a connection
  requirement $\multicastRequirement$, a rate-capacity tuple
  $(\inputRate , \edgeRate)\triangleq (\lambda_{\set{S}},
  \omega_{\set{E}})$ is \emph{admissible} if there exists a
  \emph{perfectly secure}, \emph{zero-error} network code
  $\networkCoding = \{\edgeRV_f, f\in\sessions\cup \edges\}$, such
  that
  \begin{align*}
   H(\edgeRV_e) \le \log |\edgeAlphabet_e| \le \edgeRate_e,
  & \quad\forall e \in \edges,\\
   H(\inputRV_s)= \log |\edgeAlphabet_s| \ge \inputRate_s,
    &\quad\forall s\in \sessions,
  \end{align*}
  where $\edgeRV_e$ is the message symbol transmitted along link $e$
  and $\inputRV_s$ is the input symbol generated at source $s$.
\end{definition}


The preceding definitions consider zero-error network codes and
perfect security. Relaxing these requirements prompts the following
definition.
\begin{definition}[Achievable]\label{df:achievable}
  A rate-capacity tuple $(\inputRate , \edgeRate)$ is
  \emph{achievable} if there exists a sequence of network codes
  $\networkCoding^{(n)}$ and normalizing constants $\normal(n)>0$ such
  that
  \begin{align*}
    \lim_{n\to\infty} \frac{1}{\normal(n)} H\left(\edgeRV_e^{(n)}\right) \le
    \lim_{n\to\infty} \frac{1}{\normal(n)}\log |\edgeAlphabet_e^{(n)}|
    &\le
    \edgeRate_e,  \quad\forall e \in \edges,\\
    \lim_{n\to\infty} \frac{1}{\normal(n)}H\left(\inputRV_s^{(n)}\right) =
    \lim_{n\to\infty} \frac{1}{\normal(n)} \log
    |\edgeAlphabet_s^{(n)}| &\ge \inputRate_s, \quad\forall s\in \sessions, \\
    \lim_{n\to\infty} P_e\left(\networkCoding^{(n)}\right) =0, \\
\lim_{n\to\infty} \frac{1}{\normal(n)} I(T^{(n)}_{A_r} ; W^{(n)}_{B_r}) & = 0, \quad \forall r\in\set{R}.
\end{align*}
\end{definition}

In the absence of any security constraints, $|\set{R}|=0$, these
definitions reduce to the usual ones and the multi-source, multi-sink
capacity region is given by~\cite{YanYeu07}. Bounds for the
multi-source multi-sink scenario with wiretappers were given in
\cite{ChaGra08}.

\section{Incremental Multicast}\label{sec:incr-mult}
In this section, we study a the special case of \emph{incremental
  multicast}, meaning that the session indexes are totally ordered
such that a receiver requesting a particular session also requests all
sessions with lower index. We consider the simplest incremental
multicast scenario, with only two source messages and no secrecy
constraints (permitting deterministic codes).  We will show that
determining the capacity region, even in such a simple scenario, can
be no simpler than solving the general multicast problem.

Our approach is inspired by \cite{Chan.Grant07dualities}. Let
$\espace[\groundset]\subset\Real^{2^N}$ with coordinates indexed by
proper subsets of a ground set $\groundset$ with $N$ elements. Points
$h\in\espace[\groundset]$ can be regarded as functions,
$h:2^\groundset\mapsto\Real$ with $h(\emptyset)\triangleq 0$.
Given such an $h\in\espace[\groundset]$ we will construct a special
network $\graph^\dagger$, an incremental connection requirement
$\multicastRequirement^\dagger$ and a rate-capacity tuple
$\multicastProblem(h)$ that is admissible if and only if $h$ is 
entropic.

The network topology, connection requirement and link capacities are
defined in Figure \ref{fig:thenetwork}, which for convenience, is
divided into several subnetworks. The single source node is an open
circle, labelled with the two available sessions (this node is
repeated for convenience in Figures~\ref{fig:part1}, \ref{fig:lowerbd}
and~\ref{fig:butterfly}). The destinations are double circles,
labelled with their requirements. Intermediate nodes are solid
circles. The source and sink labels define the mappings
$\sourceLocation$ and $\destinationLocation$. Each capacitated edge is
labeled with a pair of symbols denoting the edge capacity, and the
edge message (and corresponding random variable).  Unlabelled edges
are assumed to be uncapacitated, or to have a finite but sufficiently
large capacity
to losslessly forward all received messages.

The first part of the network, shown in Figure \ref{fig:part1},
contains the source where there are two independent sessions (i.e.,
two messages $S_0$ and $S_1$) available. The desired source rates
associated with $S_0$ and $S_1$ are respectively
$\sum_{i\in\set{N}}h(i)$ and $h(\set{N})$.
 There are $2N$ specific edge messages
that are of particular interest. Rather than naming all edge variables
$\edgeRV_e, e\in\edges$, we label these $2N$ particular edge variables
$U_j$ and $V_j$ for $j=1,\dots,N$.  Remaining edge variables will be
labelled with generic symbols $W_i$ indexed by an integer $i$.
 
In Figure \ref{fig:part1}, the source node generates from $S_0$ and
$S_1$ respectively the sets of network coded messages $\{U_1, U_2,
\dots, U_N\}$ and $\{V_1, V_2, \dots, V_N\}$ which are duplicated as
required and forwarded to the rest of the network. The remainder of
the network is divided into subnetworks of two types, shown in Figures
\ref{fig:lowerbd} and \ref{fig:butterfly}.

\def\scalefactor{0.5}
\begin{figure}[htbp]
  \begin{center}
   \subfigure[Source node\label{fig:part1}]
    {\includegraphics*[scale=\scalefactor]{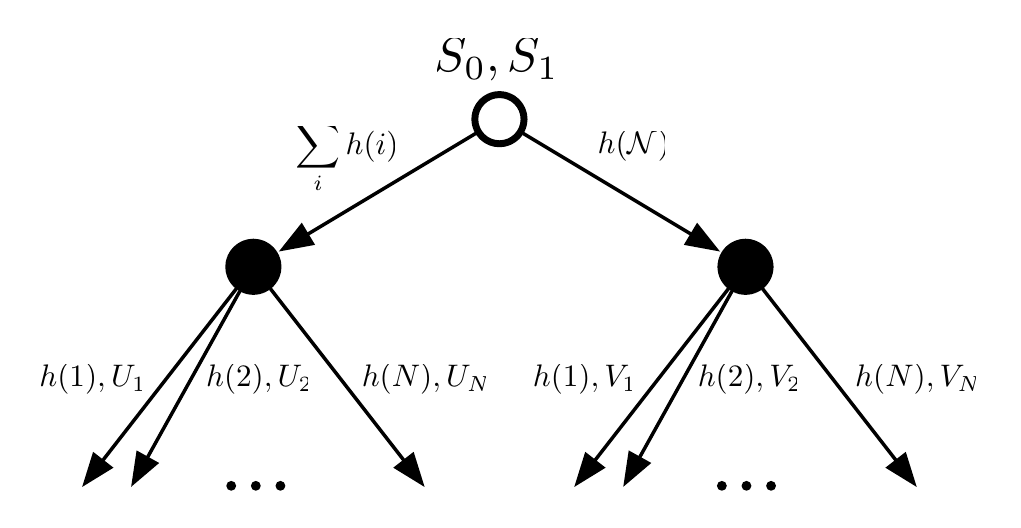}}
    \subfigure[{Type 1 subnetworks}\label{fig:lowerbd}]
    {\includegraphics*[scale=\scalefactor]{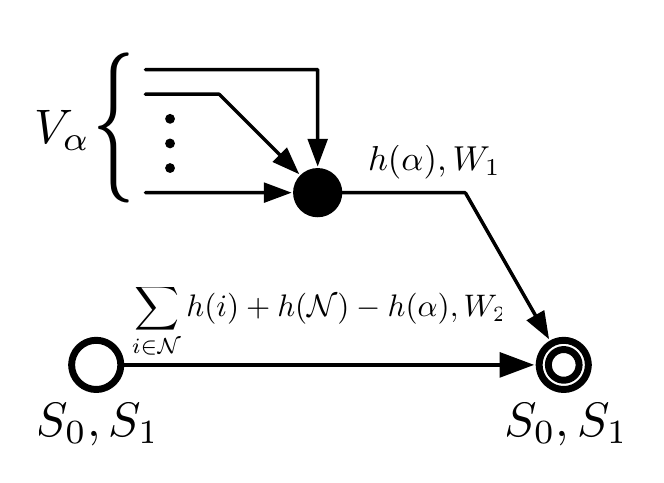}}

    \subfigure[{Type 2 subnetworks}\label{fig:butterfly}]
    {\includegraphics*[scale=\scalefactor]{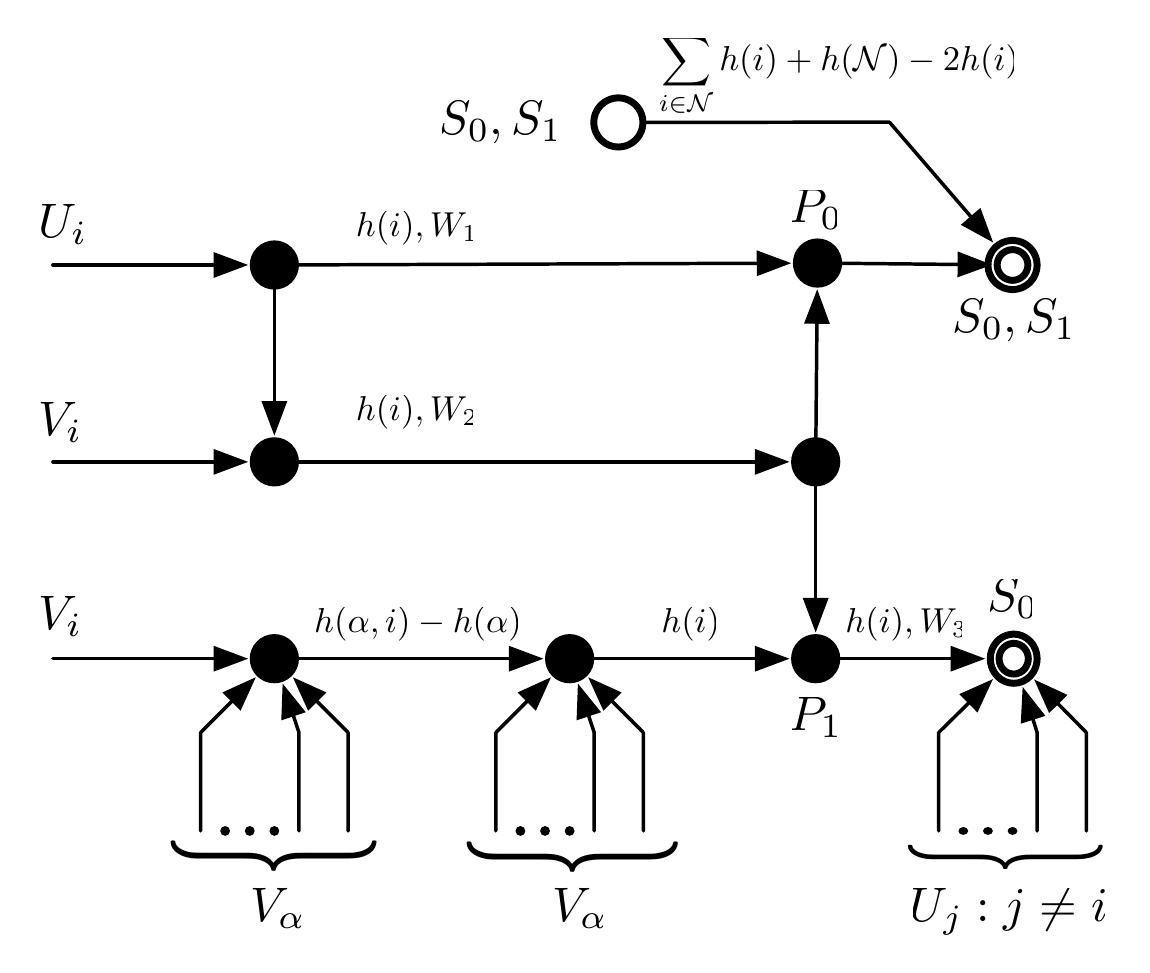}}
  \end{center}
  \caption{The network $\graph^\dagger$.}
  \label{fig:thenetwork}
\end{figure}

With reference to Figure \ref{fig:lowerbd}, there are $2^N-1$ type 1
subnetworks, one for each nonempty $\alpha\in 2^{\N}$. These
subnetworks introduce an edge of capacity $h(\N)-h(\alpha)$ between
the source and a sink requiring $S_1$. There is an intermediate node
which has another $|\alpha|$ incident edges (from Figure
\ref{fig:part1}), carrying $V_\alpha=\{V_j,j\in\alpha\}$.  The
intermediate node then has an edge of capacity $h(\alpha)$ to the
sink.

Figure \ref{fig:butterfly} shows the structure of type 2 subnetworks,
which are indexed by $\emptyset\neq\alpha\subset\N$ and an element
$i\in\N , i\not\in\alpha$.  Each type 2 subnetwork connects the source
to the upper receiver.  In addition, there are other incident edges
carrying $\{V_j:j\in \alpha\} $ and $\{U_j:j\in\set{N}\}$.  For
notational simplicity, we have written
$h\left(\alpha\cup\{i\}\right)\triangleq h(\alpha,i)$.

So far, we have described a network $\graph^\dagger$, a connection
requirement $\multicastRequirement^\dagger$ and have assigned rates to
sources and capacities to links.  Clearly
$\multicastRequirement^\dagger$ depends only on $N$, and not in any
other way on $h$. Similarly, the topology of the network
$\graph^\dagger$ depends only on $N$. The choice of $h$ affects only
the source rates and edge capacities, which are collected into the
rate-capacity tuple $\multicastProblem(h)$. Also, we can assume without loss
of generality that  $\multicastProblem(h)$ is a linear function of $h$.


\begin{definition}
  A function $h\in \espace[\N]$ is called \emph{entropic} if there
  exists discrete random variables $X_1,\dots, X_N$ such that the
  entropy of $\{X_i:i\in\alpha\}$ is equal to $h(\alpha)$ for all
  $\emptyset\neq\alpha\subseteq\set{N}$. Furthermore, $h$ is called
  \emph{quasi-uniform} if any subset of the variables are
  uniform over their support.
\end{definition}

\begin{theorem}\label{thm:direct}
  For the network $\graph^\dagger$ and a connection requirement
  $\multicastRequirement^\dagger$, if a rate-capacity tuple
  $\multicastProblem(h)$ is admissible, then $h$ is quasi-uniform and
  hence entropic.
\end{theorem}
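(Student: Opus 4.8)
The plan is to fix a deterministic zero-error network code realizing $\multicastProblem(h)$ (by the remark following Lemma~\ref{lemm:probcodeproperty} we may assume the code deterministic, since there are no secrecy constraints), and to read off from the two families of subnetworks a system of entropy (in)equalities forcing the vector $\bigl(H(V_\alpha)\bigr)_{\alpha\subseteq\N}$ to coincide with $h$ and to be realized by a quasi-uniform distribution. Throughout, every edge variable is a function of the independent uniform sources $S_0,S_1$, with $H(S_0)=\log|S_0|\ge\sum_{i\in\N}h(i)$ and $H(S_1)=\log|S_1|\ge h(\N)$, and every edge message obeys its capacity bound.

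First I would extract the \emph{lower bounds and an independence structure} from the Type~1 subnetworks. Fix nonempty $\alpha$, let $W_\alpha$ be the message on the capacity-$h(\alpha)$ edge into the Type~1 sink (a function of $V_\alpha$) and $\tilde W_\alpha$ the message on the direct edge of capacity $h(\N)-h(\alpha)$ (a function of $S_1$). Zero-error reconstruction of $S_1$ at this sink gives the chain
\[
h(\N)\le H(S_1)=I(S_1;\tilde W_\alpha,W_\alpha)\le H(\tilde W_\alpha)+H(W_\alpha)\le\bigl(h(\N)-h(\alpha)\bigr)+h(\alpha)=h(\N),
\]
so every inequality is tight. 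This yields $H(S_1)=h(\N)$, $H(W_\alpha)=h(\alpha)$, $\tilde W_\alpha\indep W_\alpha$, and, since $W_\alpha$ is a function of $V_\alpha$, the lower bound $H(V_\alpha)\ge h(\alpha)$. The same tightness shows $(\tilde W_\alpha,W_\alpha)$ is a bijective recoding of the uniform variable $S_1$ into an independent pair, whence $W_\alpha$ is \emph{uniform} on its support; this is the seed for quasi-uniformity. An identical argument applied to $S_0$ and the edges $U_1,\dots,U_N$ (with capacities $h(i)$ summing to the rate $\sum_i h(i)$) forces the $U_j$ to be mutually independent uniform variables with $H(U_j)=h(j)$, providing the scaffolding used below.

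Next I would obtain the matching \emph{upper bounds} from the Type~2 (butterfly) subnetworks. The subnetwork indexed by $(\alpha,i)$, $i\notin\alpha$, has side information $V_\alpha$ and $U_\N$ and a bottleneck of capacity $h(\alpha,i)-h(\alpha)=h(\alpha\cup\{i\})-h(\alpha)$; decoding at its receiver forces the content of $V_i$ to be recoverable from $V_\alpha$ together with that bottleneck, so that $H(V_i\mid V_\alpha)\le h(\alpha\cup\{i\})-h(\alpha)$. Summing these conditional bounds along any maximal chain $\emptyset\subset\{j_1\}\subset\{j_1,j_2\}\subset\cdots\subset\alpha$ telescopes to $H(V_\alpha)\le h(\alpha)$. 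Combined with the Type~1 lower bound this gives $H(V_\alpha)=h(\alpha)$ for every $\alpha\subseteq\N$, so the variables $X_j\triangleq V_j$ realize $h$ and $h$ is entropic. I expect the extraction of the butterfly bound to be the main obstacle: it is the only place where the precise coding structure of Figure~\ref{fig:butterfly}, rather than a generic cut-set count, is needed, since one must identify the correct cut and verify that the $U_\N$-scaffolding indeed routes the residual $V_i$-information through the capacity-$\bigl(h(\alpha,i)-h(\alpha)\bigr)$ bottleneck.

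Finally, quasi-uniformity follows by revisiting Type~1 with the equalities now in hand. Since $H(V_\alpha)=h(\alpha)=H(W_\alpha)$ and $W_\alpha$ is a function of $V_\alpha$, the map $V_\alpha\mapsto W_\alpha$ is a bijection, so $V_\alpha$ is uniform on its support exactly because $W_\alpha$ is. As this holds for every $\alpha$, every marginal of $\{V_j\}$ is uniform, i.e.\ the realizing distribution is quasi-uniform, completing the argument.
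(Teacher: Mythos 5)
Your overall architecture matches the paper's: Type 1 subnetworks give $H(V_\alpha)\ge h(\alpha)$ (and pin down $H(U_i)=h(i)$, $H(V_\N)=h(\N)$ and the independence of the $U_i$), Type 2 subnetworks are meant to supply matching conditional bounds $H(V_i\mid V_\alpha)\le h(\alpha,i)-h(\alpha)$, a chain/induction argument combines them to force $H(V_\alpha)=h(\alpha)$ for every $\alpha$, and quasi-uniformity falls out of the tightness of the Type 1 sandwich (the paper phrases this as the support of $V_\alpha$ being at most $2^{h(\alpha)}$). Your Type 1 analysis and the bijection argument for uniformity are correct and, if anything, more explicit than the paper's one-line appeal to min-cut bounds.

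The gap is exactly where you predicted it: the Type 2 bound is asserted, not proved, and it is the heart of the theorem. The Type 2 subnetwork is not a single bottleneck whose cut ``routes the residual $V_i$-information''; it is a butterfly, and $H(V_i\mid V_\alpha)= h(\alpha,i)-h(\alpha)$ does not follow from a cut-set count. The paper's argument has three stages. First, the cut into the upper receiver gives $H(W_1,W_2)\ge 2h(i)$, while $W_1,W_2$ being functions of $(U_i,V_i)$ gives $H(W_1,W_2)\le H(U_i,V_i)\le 2h(i)$; the resulting equality $H(W_1,W_2)=H(U_i,V_i,W_1,W_2)$ shows that $V_i$ is a \emph{function} of $(W_1,W_2)$, hence decodable at the internal node $P_0$. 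Second, decodability of $S_0$ at the lower receiver together with the mutual independence of the $U_j$ yields $H(U_i\mid W_3)\le I(U_i,W_3;U_j,j\ne i)=0$, so $U_i$ is decodable at $P_1$. Third, these two decodability facts are fed into the machinery of \cite{Chan.Grant07dualities} to extract $H(V_i\mid V_\alpha)=h(\alpha,i)-h(\alpha)$. None of these steps appears in your write-up, and the first two cannot be replaced by a generic cut argument --- they depend on the exact placement of $U_i$, $V_i$, $U_\N$ and $V_\alpha$ as side information in the butterfly. Until you supply them (or an equivalent), your telescoping step has nothing to telescope.
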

\begin{proof}
  Suppose that $\multicastProblem(h)$ is admissible.  By
  Definition~\ref{df:admissible}, admissibility of
  $\multicastProblem(h)$ on $\graph^\dagger,
  \multicastRequirement^\dagger$ requires the existence of a
  zero-error network code $\networkCoding$ with source messages
  $\sRV_{[\alpha]}$, $\emptyset \neq \alpha\subseteq \N$ and a subset
  of its coded messages $U_\N$ and $V_\N$. Given this hypothesis, we
  will show that $h$ is the entropy function of $V_\N$, and that
  $V_\N$ is quasi-uniform.

  First focus on Figure~\ref{fig:part1}.  Applying min-cut bounds, it
  is straightforward to prove
  \begin{align*}
    H(U_\N , V_\N) &= \sum_{i\in\set{N}} H(U_i) + H(V_\N),\\
    H(U_i) &= h(i), \forall i\in\set{N},\\
    H(V_\N)&= h(\set{N}).\\
    H(V_i)&= h(i), \forall i\in\set{N}.
  \end{align*}
  Similarly, applying min-cut bounds to type 1 subnetworks of
  Figure~\ref{fig:lowerbd}, $ H(V_\alpha) \ge h(\alpha),
  \emptyset\neq\alpha\subseteq\set{N}$.

  We now focus on type 2 subnetworks of Figure~\ref{fig:butterfly} and
  aim to prove that $H(V_\alpha) \le h(\alpha) $ for any
  $\emptyset\neq\alpha\subseteq\set{N}$. In order for the upper
  receiver to reconstruct $S_0$ and $S_1$,
  \begin{align*}
    H(W_1,W_2) + h(\set{N})+  \sum_{j\neq i} h(j)  - 2h(i)   & \ge H(S_0,S_1)
  \end{align*}
  or equivalently, $H(W_1,W_2) \ge 2h(i)$.  In addition,
  \begin{align*} 
    H(W_1,W_2) & \le H(U_i,V_i, W_1,W_2) \\
    & = H(U_i, V_i) \le  2h(i) .
  \end{align*}
  As a result, $H(W_1,W_2) = H(U_i,V_i, W_1,W_2)$ which further
  implies that $V_i$ is a function of $W_1,W_2$. Thus $V_i$ can be
  recovered at $P_0$.  On the other hand, from the lower part of the
  subnetwork,
  \begin{align*}
    H(U_i | W_3) & = H(U_i | W_3, U_j, j\neq i) + I(U_i;U_j, j\neq i | W_3) \\
    &\nequal{(a)} I(U_i;U_j, j\neq i | W_3) \\
    & \le  I(U_i,W_3 ; U_j, j\neq i)  = 0  
  \end{align*}
  where $(a)$ follows from the fact that $S_0$ can be reconstructed at
  the lower receiver. This implies that $U_i$ can be reconstructed at
  $P_1$. From~\cite{Chan.Grant07dualities}, that $P_0$ can decode
  $V_i$ and that $P_1$ can decode $U_i$ further implies
  $H(V_i|V_\alpha) = h(\alpha,i)-h(\alpha)$.  By mathematical
  induction (similar to the proof of \cite[Theorem
  1]{Chan.Grant07dualities}), the only solution that satisfies all of
  the conditions above is when the entropy function of $V_\N$ is equal
  to $h$.

  Finally, from type 1 subnetworks, the support of $V_\alpha$ is at
  most $2^{h(\alpha)}$. Hence, $V_\alpha$ is indeed quasi-uniform
  (this also implies that the $U_i$ are quasi-uniform, via
  $H(U_i)=H(V_i)=h(i)$ and the independence of the $U_i$).
\end{proof}

\begin{theorem}[Converse]\label{thm:converse}
  For the network $\graph^\dagger$ and a connection requirement
  $\multicastRequirement^\dagger$, a rate-capacity tuple
  $\multicastProblem(h)$ is admissible if $h$ is quasi-uniform.
\end{theorem}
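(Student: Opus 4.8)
The plan is to exhibit a single explicit zero-error network code on $\graph^\dagger$ that meets every capacity and rate constraint of $\multicastProblem(h)$, built directly from a quasi-uniform realization of $h$. Since $h$ is quasi-uniform, fix discrete random variables $X_1,\dots,X_N$ whose entropy function is $h$ and each of whose subsets is uniform over its support. I take the second session to be $S_1 := (X_1,\dots,X_N)$ and define the coded symbols $V_j := X_j$, so that $H(V_\alpha)=h(\alpha)$ and, crucially, $V_\alpha$ is uniform on a support of size exactly $2^{h(\alpha)}$ for every $\emptyset\neq\alpha\subseteq\N$. For the first session I let $S_0 := (X'_1,\dots,X'_N)$ consist of mutually independent variables with $H(X'_j)=h(j)$ and set $U_j := X'_j$; then the $U_i$ are independent with $H(U_\N)=\sum_i h(i)$, and $S_0 \indep S_1$. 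By construction the source rates $H(S_0)=\sum_i h(i)$ and $H(S_1)=h(\N)$ agree with $\multicastProblem(h)$, and the fan-out edges of Figure~\ref{fig:part1} simply carry the $U_j$ and $V_j$ within their capacities.

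The type~1 subnetworks are handled by a conditional index code, and this is where quasi-uniformity does the essential work. In the $\alpha$-subnetwork the intermediate node forwards $V_\alpha=X_\alpha$, which fits on its capacity-$h(\alpha)$ edge because the support of $X_\alpha$ has size $2^{h(\alpha)}$. Since $X_\N$ is uniform on a support of size $2^{h(\N)}$ while $X_\alpha$ is uniform on a support of size $2^{h(\alpha)}$, each observed value of $X_\alpha$ is consistent with \emph{exactly} $2^{h(\N)-h(\alpha)}$ completions of $X_\N$; the source therefore transmits, over the capacity-$(h(\N)-h(\alpha))$ edge, the index of the true completion within this list. Combining $X_\alpha$ with the index, the sink recovers $S_1=X_\N$ with zero error, using the direct edge at exactly its capacity. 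Were $h$ merely entropic rather than quasi-uniform, the number of completions could vary with the observed value and exceed $2^{h(\N)-h(\alpha)}$, so a fixed-rate edge would no longer suffice; this is precisely the point at which the stronger hypothesis is needed.

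The main obstacle is the type~2 (butterfly) subnetwork indexed by $(\alpha,i)$, where a bottleneck must simultaneously serve the upper and lower receivers. The equalities extracted in the proof of Theorem~\ref{thm:direct} prescribe the target: the bottleneck variables $W_1,W_2$ must jointly convey the pair $(U_i,V_i)$, consistent with $H(W_1,W_2)=H(U_i,V_i)=2h(i)$, so that the upper receiver, which already holds $U_\N$ and $V_\alpha$ on incident edges, can complete its demand, while the lower receiver recovers $S_0$ from the incident $U_\N$. Quasi-uniformity again supplies exactly what is required: $U_i$ and $V_i$ have equal support size $2^{h(i)}$, so a canonical exclusive-or (group) combination is well defined on a capacity-$h(i)$ bottleneck and lets a receiver holding one of the two recover the other; and because $(X_1,\dots,X_N)$ is quasi-uniform, $H(V_i\mid V_\alpha)=h(\alpha,i)-h(\alpha)$ holds with every conditioned support of size exactly $2^{h(\alpha,i)-h(\alpha)}$, so a conditional index of $V_i$ given $V_\alpha$ fits the remaining bottleneck capacity. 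The delicate part is not any single subnetwork but global consistency: one fixed assignment of the $U_j,V_j$ must satisfy all $2^N-1$ type~1 subnetworks and every type~2 subnetwork at once, with each bottleneck met with equality rather than a loose inequality. This reduces, in every case, to the statement that the relevant marginal and conditional supports of the quasi-uniform variables have sizes that are exact powers of two -- the property separating quasi-uniform from merely entropic $h$ -- which is what makes the zero-error requirement attainable and completes the construction.
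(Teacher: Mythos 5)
The paper states Theorem~\ref{thm:converse} without any proof (it implicitly defers to the constructions of \cite{Chan.Grant07dualities}), so there is no in-paper argument to compare against. Judged on its own, your proposal takes the right route and isolates the one fact that matters: for quasi-uniform $X_{\N}$ every conditional support has \emph{constant} size $2^{h(\alpha\cup\beta)-h(\alpha)}$, so fixed-length zero-error conditional index codes exist at exactly the prescribed edge capacities --- precisely what fails for merely entropic $h$. Your treatment of the source fan-out and of the type~1 subnetworks is complete and correct, and your closing worry about ``global consistency'' is actually a non-issue you could delete: the subnetworks intersect only in the fan-out edges of Figure~\ref{fig:part1}, which carry the fixed messages $U_j,V_j$, so once those are chosen each subnetwork is coded independently.

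The genuine gap is in the type~2 subnetworks. Recovering $(U_i,V_i)$ at the upper receiver does not ``complete its demand'': that receiver must reconstruct $S_1$, i.e.\ all of $V_{\N}$, while the side information you account for only yields $V_{\alpha\cup\{i\}}$. The paper's description (``each type 2 subnetwork connects the source to the upper receiver'') indicates a further direct source edge whose capacity must be met by yet another conditional index code --- of $V_{\N}$ given the receiver's side information --- and neither that code nor the associated capacity bookkeeping against Figure~\ref{fig:butterfly} appears in your write-up. Likewise, the gadget has a two-edge bottleneck $W_1,W_2$ with $H(W_1,W_2)=2h(i)$ plus a lower branch $W_3$ from which $U_i$ alone must be decodable given $U_j$, $j\neq i$; your single ``canonical exclusive-or'' sentence does not say which of $U_i\oplus V_i$, $U_i$, or a conditional index rides on which edge, and the choice is constrained. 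The conditional-index technique does extend to every edge of the type~2 gadget, so nothing here is unfixable, but as written the admissibility of the type~2 subnetworks is asserted rather than exhibited, and that is the heart of the theorem.
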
 

From Theorems \ref{thm:direct} and \ref{thm:converse}, we can follow
the approach in \cite{Chan.Grant07dualities} and easily extend the
result to almost entropic functions.
\begin{theorem}\label{thm:almostent}
  For the network $\graph^\dagger$ and a connection requirement
  $\multicastRequirement^\dagger$, a rate-capacity tuple
  $\multicastProblem(h)$ is  achievable if and only if
  $h$ is almost entropic\footnote{A function $h$ is almost entropic if
    it is the limit of a sequence of entropic functions.}.
\end{theorem}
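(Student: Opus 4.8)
The plan is to prove Theorem \ref{thm:almostent} by combining the two previous theorems with a standard limiting/approximation argument, exactly mirroring the structure used in \cite{Chan.Grant07dualities}. The key observation is that Theorems \ref{thm:direct} and \ref{thm:converse} together establish that $\multicastProblem(h)$ is \emph{admissible} (zero-error, perfect security, single block) if and only if $h$ is quasi-uniform, and hence entropic. The gap between this and the target statement is precisely the gap between single-shot admissibility and the asymptotic notion of \emph{achievability} from Definition~\ref{df:achievable}, where we are permitted a sequence of codes with vanishing error and a normalizing constant $\normal(n)$. The whole point of passing to achievability is to enlarge the set of attainable $h$ from entropic functions to their closure $\bar{\Gamma}^*$, the almost entropic functions.

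\emph{Forward direction (if $h$ is almost entropic, then $\multicastProblem(h)$ is achievable).}
First I would take $h$ almost entropic, so by definition there is a sequence $h^{(n)}\to h$ of entropic functions. A standard argument shows that any entropic $h^{(n)}$ can be approximated arbitrarily well (in the sense required by the rate/capacity limits) by a quasi-uniform function: taking $k$-fold Cartesian products of the underlying random variables and invoking the asymptotic equipartition property, one obtains a function $\tfrac{1}{k}h^{(n),k}$ that is quasi-uniform and converges to $h^{(n)}$ as $k\to\infty$. For each such quasi-uniform approximant, Theorem \ref{thm:converse} supplies an admissible code. A diagonalization over the indices $n$ and $k$, together with the linearity of $\multicastProblem$ in $h$ noted in the construction, then yields a sequence $\networkCoding^{(n)}$ with the appropriate normalizing constants $\normal(n)$ satisfying all four limits in Definition~\ref{df:achievable}; in particular $P_e\to 0$ (indeed each code in the sequence is zero-error) and the rate/capacity constraints hold in the limit. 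Note that since there are no secrecy constraints in this section, the leakage condition is vacuous here.

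\emph{Converse direction (if $\multicastProblem(h)$ is achievable, then $h$ is almost entropic).}
Conversely, suppose $\multicastProblem(h)$ is achievable via a sequence $\networkCoding^{(n)}$ with $P_e^{(n)}\to 0$. For each $n$, the normalized entropy function of the coded messages $V_\N^{(n)}$ defines a (scaled) entropic point; I would re-run the min-cut and decoding arguments from the proof of Theorem \ref{thm:direct}, but now in their approximate, Fano-inequality versions, so that each exact equality is replaced by an inequality with an $o(\normal(n))$ slack driven by $P_e^{(n)}$. This shows that $\tfrac{1}{\normal(n)}$ times the entropy function of $V_\N^{(n)}$ converges to $h$. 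Since each of these normalized entropy functions lies in $\bar{\Gamma}^*$ (which is closed under scaling and is itself closed), the limit $h$ is almost entropic.

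The main obstacle is the converse, specifically making the Fano-type relaxation of the clean equalities from Theorem \ref{thm:direct} fully rigorous. In the zero-error proof, crucial steps such as ``$V_i$ is a function of $W_1,W_2$'' and ``$U_i$ can be reconstructed at $P_1$'' follow from exact equality of entropies; under merely vanishing error probability these become statements that $H(V_i\mid W_1,W_2)=o(\normal(n))$ and similar, via Fano's inequality. One must track these error terms carefully through the inductive argument establishing $H(V_i\mid V_\alpha)=h(\alpha,i)-h(\alpha)$ to confirm that they remain $o(\normal(n))$ and therefore vanish after normalization. Because the underlying duality between network solvability and entropy functions was already developed in \cite{Chan.Grant07dualities}, I expect this relaxation to go through by the same machinery, so the proof ultimately reduces to citing that approach rather than reconstructing it from scratch.
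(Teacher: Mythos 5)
Your proposal matches the paper's approach: the paper's entire justification for Theorem~\ref{thm:almostent} is the single remark that one combines Theorems~\ref{thm:direct} and~\ref{thm:converse} and ``follows the approach in \cite{Chan.Grant07dualities}'' to pass from entropic to almost entropic functions, which is exactly the quasi-uniform approximation, diagonalization, and Fano-type relaxation you describe. In fact you supply considerably more detail than the paper does, and correctly identify the converse's Fano bookkeeping as the step that the paper (and the cited reference) is implicitly relying on.
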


\section{Secure Multicast}\label{sec:secure-multicast}
Linear network codes (for single source multicast) that are resilient
to eavesdropping are considered in \cite{cai2002snc}. Sufficient
conditions for the existence of such codes was also derived. This was
further generalized in \cite{CaiYeu07} to multi-source cases.  A
similar result was also obtained in \cite{feldman2004csn} which gives
necessary and sufficient conditions under which transmitted data are
safe from being revealed to eavesdroppers. All of the above-cited
works assume that the wiretapper aims to reconstruct all sources.
Similar results have been obtained where only a subset of sources are
to be reconstructed~\cite{bhattad2005wsn}. Inner and outer bounds to
the secure capacity region were given in \cite{ChaGra08}.

We will now show that even for a simple single-session secure
multicast problem, determination of the capacity region can be
extremely hard.  In particular, the problem is at least as hard as any
multi-source multi-session multicast problem.

Figure~\ref{fig:gstar} shows the construction for a network
$\graph^\star$. The source message is $X$ whose  rate is
$d$. The link capacities are parametrized by $0 < c < d$.  There is a
single eavesdropper who only observes the message variable
$W_3$. Thus Figure~\ref{fig:gstar} also specifies
$\multicastRequirement^\star$, and the wiretapping pattern
$\set{A}^\star, \set{B}^\star$.

\begin{figure}[htb]\centering
  \includegraphics[scale=\scalefactor]{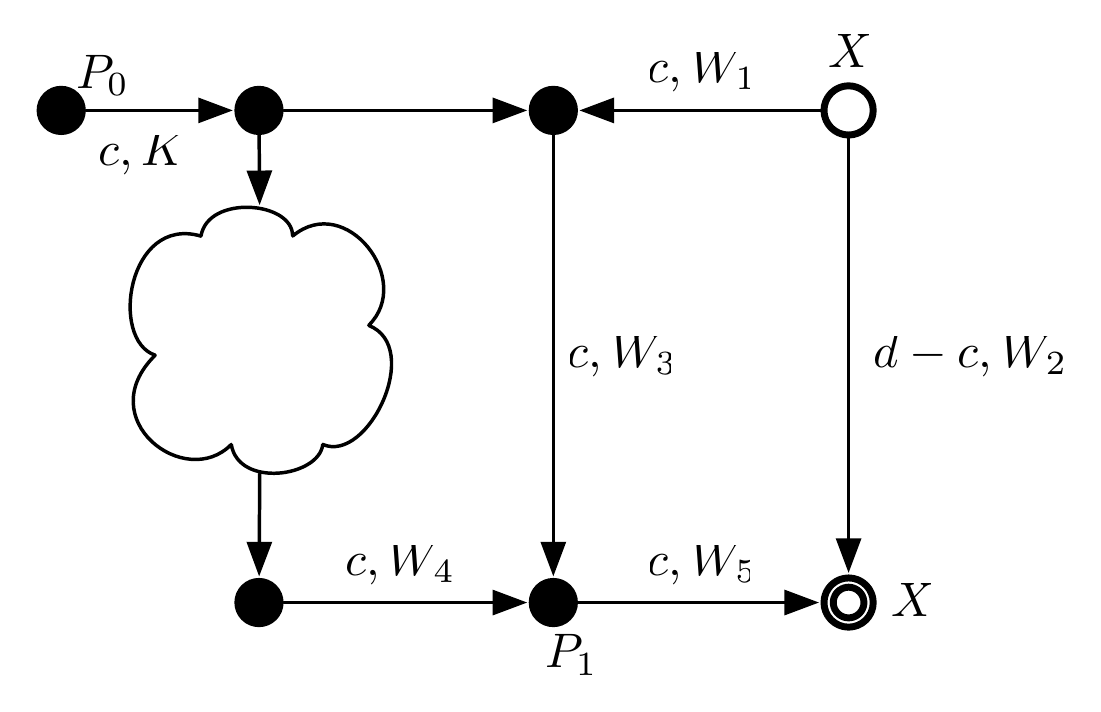} 
  \caption{The network $\graph^\star$.}\label{fig:gstar}
\end{figure}

\begin{proposition}\label{prop:decodeablekey}
  Given network $\graph^{\star}$ and  connection (and secrecy)
  requirement $\multicastRequirement^{\star}$ depicted in Figure
  \ref{fig:gstar}, if a rate-capacity tuple $\multicastProblem(h)$ is
  admissible then $K$ is a function of $W_4$.
\end{proposition}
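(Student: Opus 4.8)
The plan is to reduce the claim to the single statement $H(K \mid W_4) = 0$, which is equivalent to $K$ being a function of $W_4$. To establish this I would exploit the three consequences of admissibility guaranteed by Definition~\ref{df:admissible}: the zero-error decoding constraints at the legitimate sinks, the perfect-security constraint $I(X; W_3) = 0$ at the single eavesdropper, and the capacity bounds $H(W_e) \le c$ on the edges labeled $c$, recalling throughout that $0 < c < d = H(X)$.

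First I would extract the entropy relations forced by the topology. Applying min-cut bounds to the capacitated edges gives $H(W_3) \le c$ together with the corresponding bound on the edge carrying $W_4$, while the prescribed source rate fixes $H(X) = d$. The security constraint is then rewritten in the convenient form $H(X \mid W_3) = H(X) = d$, i.e.\ observing the eavesdropped link leaves the source entirely undetermined, so whatever is carried on $W_3$ is independent of $X$.

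Next I would bring in the legitimate decoder. Zero-error admissibility means some downstream sink reconstructs $X$ from its incident messages, which include the (also wiretapped) message $W_3$; writing this as a functional dependence $H(X \mid W_3, W_4, \dots) = 0$ and combining it with $H(X \mid W_3) = d$ forces the remaining incident messages to supply all $d$ units of information about $X$ that $W_3$ withholds. Since $W_3$ is independent of $X$ yet is genuinely used in decoding, the only way the sink can strip the mask from $W_3$ and recover $X$ is to also hold the key mixed into $W_3$; the topology routes exactly this key through $W_4$. I would make this precise by squeezing a chain of entropy inequalities down to equalities to obtain $H(K \mid W_4) = 0$. Because the wiretapping scenario permits probabilistic codes, the node recovering $K$ may use internal randomness $V$, so I would first express $K$ as a function of $(W_4, V)$ and then invoke Lemma~\ref{lemm:probcodeproperty} to discard $V$ and conclude that $K$ is a function of $W_4$ alone.

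The main obstacle will be the tightness step: converting the independence supplied by security together with the functional dependence supplied by decoding into an \emph{exact} equality $H(K \mid W_4) = 0$, rather than a mere inequality. The delicate point is to account for every bit in the capacity budget and show that no residual uncertainty in $K$ can be hidden outside $W_4$; it is precisely the strict inequality $c < d$ that eliminates the slack and renders the dependence deterministic.
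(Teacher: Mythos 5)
Your proposal correctly identifies the target ($H(K\mid W_4)=0$) and the three resources available (zero-error decoding, $I(X;W_3)=0$, and the edge-capacity budget), but the step you defer --- ``squeezing a chain of entropy inequalities down to equalities'' --- is the entire proof, and the sentence ``the only way the sink can strip the mask from $W_3$ is to also hold the key, routed through $W_4$'' assumes the conclusion rather than deriving it. The paper's argument pivots on an auxiliary variable you never mention: the edge message $W_1$ of capacity $c$ feeding the first sink. Because that sink decodes $X$ from $(W_1,W_2)$ and $H(W_1)+H(W_2)\le c+(d-c)=H(X)$ with \emph{exact} equality of the budget, one gets $H(W_1)=c$, $H(W_1,W_2\mid X)=0$, and (via a min-cut on $\{W_2,W_5\}$) $H(W_5)=c$ with $W_5$ a function of $X$. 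The variable $W_1$ then plays three roles simultaneously: it is a deterministic function of $X$ of full entropy $c$; it is independent of $W_3$ (by secrecy, since $I(X;W_3)=0$); and it is recoverable both from $(W_3,W_4)$ and from $(K,W_3)$ (shown via $I(W_3,W_4;X)=I(W_3,W_4;W_1)\ge I(W_3,W_4;W_5)=c$, which is where Lemma~\ref{lemm:probcodeproperty} is actually invoked --- to get $H(W_5\mid W_3,W_4)=0$, not to strip randomness from $K$ at the end). From these three facts the paper extracts $H(K)=H(W_4)=c$, then $H(K\mid W_1,W_3)=H(W_4\mid W_1,W_3)=0$, and finally sandwiches $2c\ge H(W_1,W_3)=H(W_1,K,W_3,W_4)\ge H(W_1)+H(K,W_4)\ge 2c$ to conclude $H(K\mid W_4)=0$. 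None of this chain is present or even sketched in your proposal.

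Two smaller points. First, the tightness does not come from the strict inequality $c<d$ as you claim; it comes from the two capacities summing exactly to the source rate, $c+(d-c)=d$, which forces every inequality in the first block to be an equality. Second, your concluding appeal to Lemma~\ref{lemm:probcodeproperty} to remove internal randomness from the recovery of $K$ is unnecessary (once $H(K\mid W_4)=0$ is established, $K$ is already a deterministic function of $W_4$) and would in any case require an independence hypothesis between $K$ and the internal randomness that you have not established; the lemma's correct role is earlier, applied to $W_5$, which is a function of $X$ and hence independent of the nodes' internal randomness.
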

\begin{proof}
  From the capacity constraint on $\graph^{\star}$, we
  have
  \begin{align*}
    H(W_1,W_2)& \le H(W_1)+H(W_2) \\
    & = c + d -c \\
    & = H(X).
  \end{align*}
  Together with the decodability requirement, $H(X|W_1,W_2) = 0$, we
  have
  \begin{align*}
   H(W_1, W_2)&=H(W_1)+H(W_2)  \\
   H(W_1,W_2|X)&=0 \\
   H(W_1,W_2) &= H(X) \\
   H(W_1) &= c \\
   H(W_2)&=d-c.
  \end{align*}
  Applying a min-cut bound on the set of edge variables $\{W_2,
  W_5\}$, we can also prove that $H(W_5) = c$ and $H(W_5|X) = 0$.  On
  the other hand, the secrecy constraint requires $I(W_3;X)=0$ and
  hence
  \begin{equation}\label{eqn:a}
    I(W_1;W_3)=0 
  \end{equation}
  as $W_1$ is a function of $X$. 


 Now, we will show that $H(W_1 | W_3,W_4 ) = 0$.  First,
  \begin{align*}
    I(W_3,W_4;X) & \nequal{(a)} I(W_3,W_4;W_1,X)  \\
    &= I (W_3,W_4;W_1) + I(W_3,W_4;X|W_1) \\
    &\nequal{(b)} I (W_3,W_4;W_1)      
  \end{align*}
  where (a) follows from the fact that $W_1$ is a function of $X$ and
  (b) follows from the conditional independence implied by the
  underlying network topology.  Using the same argument, we can also
  prove that $  I(K,W_3; X) =  I (K,W_3;W_1)  $.

  Since $W_5$ is a function of $X$ and is thus independent of internal
  randomness, Lemma \ref{lemm:probcodeproperty} implies that
  $H(W_5|W_3,W_4) = 0$.  Together with $H(W_5)=c$, we have
  \begin{align*}
    I (W_3,W_4;W_1)  & =  I (W_3,W_4;X )  \\
    & \ge I (W_3,W_4;W_5 ) \\
    & = H(W_5) = c.
  \end{align*}
  Since $H(W_1) = c$, it implies that $H(W_1|W_3,W_4)=0$ or
  equivalently that $W_1$ is a function of $W_3$ and $W_4$. Similarly,
  using the same argument, we can also prove that $H(W_1| K,W_3)=0$.

  Our final aim is to show that $H(K)=H(W_4)=c$ and $H(K,W_4) = 2c$.
  Clearly, both $H(K)$ and $H(W_4)$ are bounded above by $c$ due to
  the edge capacity constraint. We obtain a lower bound on the entropy
  of $K$ as follows.
  \begin{align*}
    H(K) & \ge I(K;W_1 | W_3) \\
    & =  I(K;W_1 | W_3) + H(W_1|K,W_3) \\
    & = H(W_1|W_3) \\
    & \nequal{(a)} H(W_1) = c
  \end{align*}
  where (a) follows from (\ref{eqn:a}). Hence, $H(K)=c$. And
  similarly, we can also prove that $H(W_4)=c$.

  Independence of $W_1$ and $W_3$ implies
  \begin{align*}
    H(K|W_1,W_3) & = H(W_1,K,W_3) - H(W_1,W_3) \\
    &= H(K,W_3) - H(W_1,W_3) \\
    &= H(K,W_3) - H(W_1) - H(W_3) \\
    &\nequal{(a)}  H(W_3|K) - H(W_3)  \le 0,
  \end{align*}
  where (a) follows from $H(W_1) = H(K)= c$.  Consequently,
  $H(K|W_1,W_3)=0$.

  Similarly, $H(W_4|W_1,W_3) = 0$. Finally,
  \begin{align*}
    2c & \ge H(W_1,W_3) \\
    & = H(W_1,K,W_3,W_4) \\
    & \ge   H(W_1, K, W_4) \\
    & \nequal{(a)} H(W_1) + H(K,W_4)    \ge 2c   
  \end{align*}
  where (a) follows from independence of $W_1$ and $(K,W_4)$.
  Hence, $H(K,W_4) = c$ which further implies  $H(K|W_4) = H(W_4|K)=0$. 
\end{proof}

Under a regularity condition (that $2^c$ and $2^d$ are integers), the
converse of Proposition~\ref{prop:decodeablekey} also holds.
\begin{proposition}[Converse]\label{prop:decodeablekeyconv}
  For the network $\graph^{\star}$ with connection (and secrecy)
  requirement $\multicastRequirement^{\star}$, the specified
  rate-capacity tuple is admissible if a secret key of a rate $c$ can
  be transmitted from the node $P_0$ to $P_1$.
\end{proposition}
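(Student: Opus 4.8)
The plan is to prove admissibility by an explicit \emph{one-time pad} construction that uses the hypothesized key as encryption material, thereby turning the secrecy requirement into a pure independence condition. Suppose, as hypothesized, that a secret key $K$ of rate $c$ can be delivered from $P_0$ to $P_1$. By the regularity condition $2^c$ is an integer, so I would take $K$ to be uniformly distributed on an abelian group $\mathcal{G}$ of order $2^c$ (for instance $\mathbb{Z}_{2^c}$), generated as fresh internal randomness at the source $P_0$ and hence independent of the source message $X$. Because $2^d$ is also an integer, the uniform source $X$ of rate $d$ can be realised on an alphabet of the required size and split into two independent uniform parts $X=(X_1,X_2)$, with $X_1$ of rate $c$ taking values in $\mathcal{G}$ and $X_2$ of rate $d-c$.

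The construction then assigns the edge messages so as to match the structure already forced in Proposition~\ref{prop:decodeablekey}. First I would route the two source parts out of $P_0$ as $W_1=X_1$ (rate $c$) and $W_2=X_2$ (rate $d-c$), so that $(W_1,W_2)$ determines $X$. On the wiretapped link I would transmit the ciphertext $W_3=W_1\oplus K=X_1\oplus K$, where $\oplus$ is the group operation of $\mathcal{G}$. The hypothesized key-distribution code supplies $W_4$ (with $K$ a function of $W_4$) along a route not seen by the adversary, so that $P_1$ can recover $K$. At $P_1$ I then invert the pad, $W_1=W_3\ominus K$, and combine with $W_2$ to reconstruct $X=(W_1,W_2)$; any auxiliary edge such as $W_5$ may carry a forwarded copy of $X_1$. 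This is a deterministic, zero-error code at the sink.

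It remains to check perfect security together with the capacity constraints. For security, the adversary observes only $W_3$; since $K$ is uniform on $\mathcal{G}$ and independent of $(X_1,X_2)$, the standard one-time pad computation gives that $W_3=X_1\oplus K$ is uniform and independent of $X$, so $I(W_3;X)=0$ exactly as required. Note that the same computation yields $I(W_3;K)=0$, confirming that overlaying the pad does not expose the key on the wiretapped link. For the capacities, the group structure forces $H(W_3)=c$, while $H(W_1)=H(W_4)=H(W_5)=c$ and $H(W_2)=d-c$, each matching the capacity of its link, and all alphabets have integer size precisely because of the regularity condition.

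The main obstacle is the careful composition of the given key-distribution code with this encryption layer. Two points need attention. First, the freshly generated $K$ must be genuinely independent of $X$ so that the pad is perfectly secret; this is guaranteed by implementing $K$ as node-internal randomness at $P_0$, and Lemma~\ref{lemm:probcodeproperty} ensures that introducing this randomness does not impair decodability of $X$ at $P_1$. Second, one must verify that the composed code still satisfies \emph{all} secrecy conditions, not merely those of the underlying key scheme; but since the adversary sees only $W_3$ and none of the key-carrying links, the entire secrecy requirement collapses to the single condition $I(W_3;X)=0$ already established. The hypothesis that $2^c$ and $2^d$ are integers is exactly what permits the uniform group-valued variables and the exact zero-error split of $X$, and it is the only place where this assumption is invoked.
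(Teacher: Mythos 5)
The paper states Proposition~\ref{prop:decodeablekeyconv} without giving any proof, so there is nothing to compare against directly; judged on its own, your one-time-pad construction ($W_1=X_1$, $W_2=X_2$, $W_3=X_1\oplus K$ with $K$ uniform, independent, and delivered via $W_4$) is correct, satisfies every constraint that the proof of Proposition~\ref{prop:decodeablekey} shows is forced, and is evidently the intended argument. One small quibble: your appeal to Lemma~\ref{lemm:probcodeproperty} is unnecessary, since decodability at $P_1$ is immediate from $X_1=W_3\ominus K$; the lemma is needed in the direct part, not here.
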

 
Essentially, Propositions \ref{prop:decodeablekey} and
\ref{prop:decodeablekeyconv} suggest that the admissibility of the
single source secure multicast problem depends on communication of a
secret key from $P_0$ to $P_1$.  Adhering several copies of
$\graph^\star$ together (see Figure~\ref{fig:manygstar}), we can
easily generalize the network such that admissibility implies that
multiple secret keys must be transmitted across a network. This turns
the single source secure multicast problem into a multi-source
multicast.


\begin{theorem}\label{thm:securemulticast}
  For any multicast problem (without secrecy constraints), there exists
  a corresponding secure multicast problem such that the multicast
  problem is admissible if and only if the corresponding secure
  multicast problem is also admissible. 
Consequently, using the single-source two sessions network 
  $\graph^\dagger$ and a connection requirement
  $\multicastRequirement^\dagger$, there exists a secure multicast problem
  such that a rate capacity tuple  $\multicastProblem(h)$ is  achievable if and only if
  $h$ is almost entropic.
\end{theorem}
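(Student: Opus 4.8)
The plan is to prove the theorem in two stages: first establish the general equivalence between an arbitrary secrecy-free multicast problem and a suitably constructed secure single-source multicast problem at the level of \emph{admissibility}, and then specialize to $\graph^\dagger,\multicastRequirement^\dagger$ and pass to achievability using Theorem~\ref{thm:almostent}. The engine for the first stage is the gadget $\graph^\star$ together with Propositions~\ref{prop:decodeablekey} and~\ref{prop:decodeablekeyconv}, which show that admissibility of the secure problem on a single copy of $\graph^\star$ is equivalent to delivering a secret key of rate $c$ from $P_0$ to $P_1$. The crucial structural fact is that the secrecy constraint $I(W_3;X)=0$, exploited in Proposition~\ref{prop:decodeablekey}, forces the key $K$ to be independent of the main message $X$, so that $K$ behaves exactly like a fresh independent source whose transmission is a source-to-sink requirement.

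For the first stage I would start from an arbitrary multicast problem with source set $\sessions$, source locations $\sourceLocation$ and sink requirements $\destinationLocation$, carried on some network $\graph$, and build a secure single-source network by adhering one copy of $\graph^\star$ per source $s\in\sessions$ (Figure~\ref{fig:manygstar}), with the key $K_s$ of that copy generated at the node playing the role of $\sourceLocation(s)$ and required at the nodes in $\destinationLocation(s)$, routed through an embedded copy of $\graph$. In the forward direction, if the secure problem is admissible then Proposition~\ref{prop:decodeablekey} applied to each copy forces $K_s$ to be recoverable at the corresponding role of $P_1$, i.e. delivered to its sinks, while the independence established in that proof guarantees the $K_s$ are mutually independent and independent of the main message; they therefore constitute a valid solution of the original multi-source multicast. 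Conversely, if the original problem is admissible, I would generate each $K_s$ as fresh uniform randomness, route it with the assumed multicast code, and combine this with the local construction of Proposition~\ref{prop:decodeablekeyconv} to assemble a perfectly secure, zero-error code for the composite network. This yields the stated if-and-only-if for admissibility.

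For the consequential statement I would apply the first stage to the specific problem $\graph^\dagger,\multicastRequirement^\dagger$, obtaining a secure multicast problem whose admissibility of $\multicastProblem(h)$ is equivalent to admissibility on $\graph^\dagger$, and hence by Theorem~\ref{thm:direct} to $h$ being quasi-uniform. To reach achievability and almost-entropic $h$, I would repeat the limiting argument that produced Theorem~\ref{thm:almostent} from Theorems~\ref{thm:direct}--\ref{thm:converse}: given a sequence $\networkCoding^{(n)}$ with vanishing error and vanishing normalized leakage, read off a sequence of quasi-uniform functions from the induced entropies of $V_\N$ and identify the limit as an almost-entropic $h$; conversely, approximate an almost-entropic $h$ by entropic functions and invoke the converse on each. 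Combining the equivalence with this passage gives that $\multicastProblem(h)$ is achievable on the secure problem if and only if $h$ is almost entropic.

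The main obstacle I expect is the bookkeeping in the first stage when many copies of $\graph^\star$ share the embedded network: I must verify that the secrecy constraints of the different copies do not couple, so that the forced keys $K_s$ remain \emph{simultaneously} independent, as is needed to model independent sources, while still being deliverable through the shared topology. A second delicate point is the asymptotic step, where the error probability and the normalized information leakage must be driven to zero at the same time; ensuring that the key-generation randomness can be made asymptotically secret without sacrificing the multicast rate is the crux of lifting admissibility to achievability.
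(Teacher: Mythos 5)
Your proposal matches the paper's intended argument: the paper itself gives no formal proof of Theorem~\ref{thm:securemulticast}, only the sketch in the preceding paragraph, which is exactly your construction --- adhere one copy of $\graph^\star$ per source as in Figure~\ref{fig:manygstar}, use Propositions~\ref{prop:decodeablekey} and~\ref{prop:decodeablekeyconv} to identify admissibility of the secure problem with delivery of the independent secret keys $K_s$ (thereby reducing to the secrecy-free multi-source multicast), and then specialize to $\graph^\dagger,\multicastRequirement^\dagger$ and pass to achievability via the limiting argument of Theorem~\ref{thm:almostent}. Your added attention to the simultaneous independence of the keys and to the joint vanishing of error and normalized leakage fills in details the paper leaves implicit, but the route is the same.
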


\begin{figure}[htb]\centering
  \includegraphics[scale=0.4]{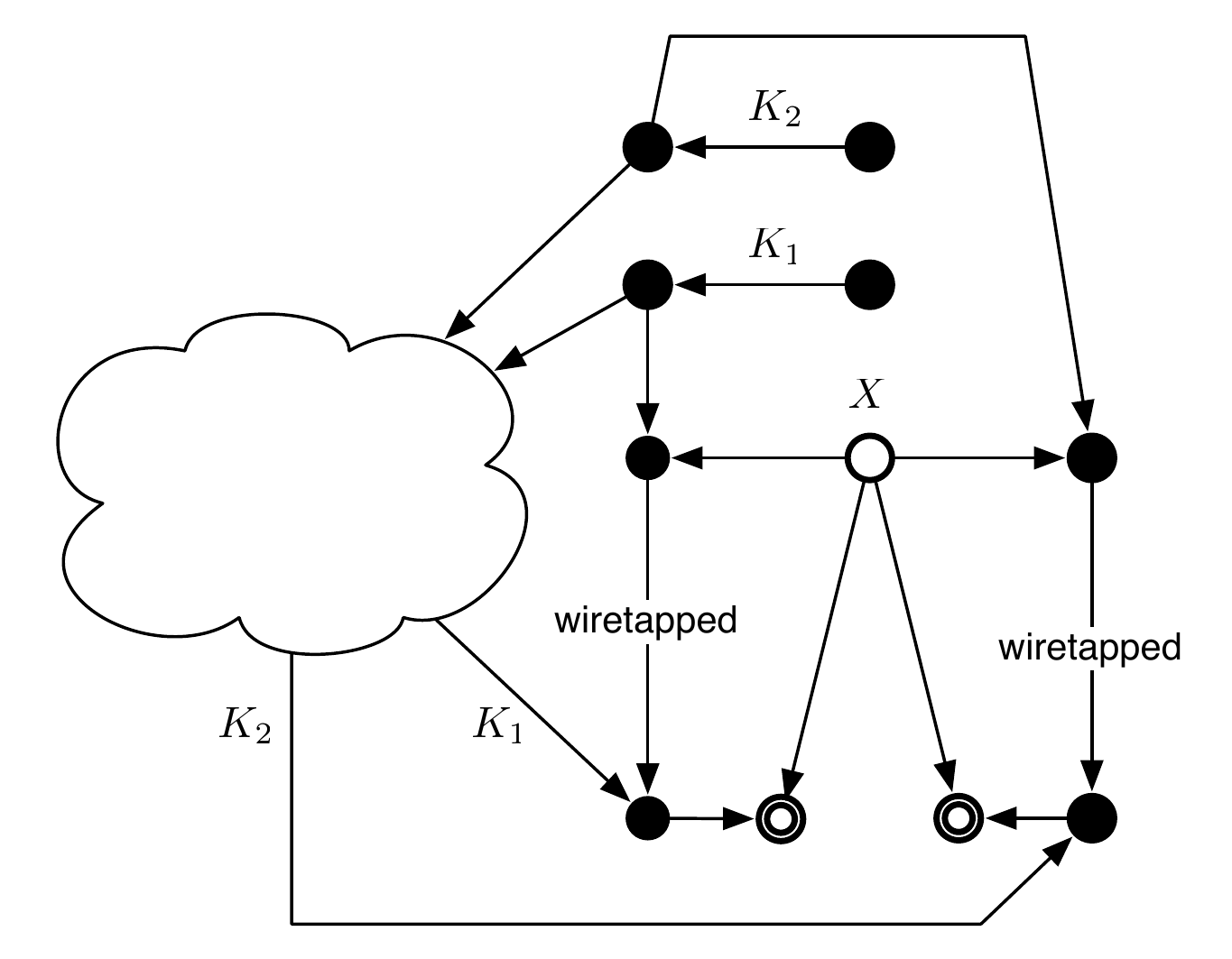} 
  \caption{Several copies of $\graph^\star$.}\label{fig:manygstar}
\end{figure}

\section{Implications and conclusion}\label{sec:implications}

Theorems \ref{thm:almostent} and \ref{thm:securemulticast} show that
even for a single-source network multicast problem with two
independent sets of messages or for a single source secure multicast
problem, the determination of the set of achievable rate-capacity
tuples can be extremely hard.  Following the same arguments as used
in~\cite{Chan.Grant07dualities}, we can also prove the following
results for a single-source two-session multicast problem or for a
single-source single-session multicast problem with secrecy
constraints:

\begin{enumerate}
\item Capacity regions are not polyhedral\footnote{That the
    single-source single-session secure multicast problem has a
    non-polyhedral capacity region is somewhat surprising, since the
    region for the same problem without the secrecy constraint is
    completely determined by the min-cut bound} in general.
\item LP bounds are not tight in general.
\item Linear codes are not sufficient to achieve capacity.
\end{enumerate}
In other words, finding capacity regions for (secure) multicast
problems seems to be a mission impossible. Not only are the existing
bounding techniques loose, the non-polyhedral nature of the capacity
region suggests that LP bounds cannot fully characterize the region,
even with the addition of more and more newly discovered information
inequalities. Any finite set of such new inequalities can only further
tighter the bound, but can never yield the exact capacity region.

Despite the hardness of the problem, there are still many questions to
be answered. It is unclear what makes finding the capacity region
problem so difficult.  In the case of a single session multicast or
the case where there are only two sinks, capacity regions have
explicit polyhedral characterizations provided by min-cut bounds. On
the other hand, where there are many sinks, the capacity region can be
extremely complicated to characterize, even if there are only two
independent
sessions. 
It will be of great importance to classify the set of networks and
connection requirements that lead to polyhedral capacity regions
characterized by min-cut bounds or LP bounds.



\begin{thebibliography}{10}
\providecommand{\url}[1]{#1}
\csname url@samestyle\endcsname
\providecommand{\newblock}{\relax}
\providecommand{\bibinfo}[2]{#2}
\providecommand{\BIBentrySTDinterwordspacing}{\spaceskip=0pt\relax}
\providecommand{\BIBentryALTinterwordstretchfactor}{4}
\providecommand{\BIBentryALTinterwordspacing}{\spaceskip=\fontdimen2\font plus
\BIBentryALTinterwordstretchfactor\fontdimen3\font minus
  \fontdimen4\font\relax}
\providecommand{\BIBforeignlanguage}[2]{{%
\expandafter\ifx\csname l@#1\endcsname\relax
\typeout{** WARNING: IEEEtran.bst: No hyphenation pattern has been}%
\typeout{** loaded for the language `#1'. Using the pattern for}%
\typeout{** the default language instead.}%
\else
\language=\csname l@#1\endcsname
\fi
#2}}
\providecommand{\BIBdecl}{\relax}
\BIBdecl

\bibitem{AhlCai00}
R.~Ahlswede, N.~Cai, S.-Y.~R. Li, and R.~W. Yeung, ``Network information
  flow,'' \emph{IEEE Trans. Inform. Theory}, vol.~46, no.~4, pp. 1204--1216,
  July 2000.

\bibitem{LiYeu03}
S.-Y.~R. Li, R.~Yeung, and N.~Cai, ``Linear network coding,'' \emph{IEEE Trans.
  Inform. Theory}, vol.~49, no.~2, pp. 371--381, Feb. 2003.

\bibitem{DouFre05}
R.~Dougherty, C.~Freiling, and K.~Zeger, ``Insufficiency of linear coding in
  network information flow,'' \emph{IEEE Trans. Inform. Theory}, vol.~51,
  no.~8, pp. 2745--2759, Aug. 2005.

\bibitem{YanYeu07}
X.~Yan, R.~Yeung, and Z.~Zhang, ``The capacity region for multi-source
  multi-sink network coding,'' in \emph{IEEE Int. Symp. Inform. Theory}, Nice,
  France, 2007, pp. 116--120.

\bibitem{SonYeu03}
L.~Song, R.~W. Yeung, and N.~Cai, ``Zero-error network coding for acyclic
  networks,'' \emph{IEEE Trans. Inform. Theory}, vol.~49, no.~12, pp.
  3129--3139, Dec. 2003.

\bibitem{Yeu02}
R.~W. Yeung, \emph{A First Course in Information Theory}, ser. Information
  Technology: Transmission, Processing and Storage.\hskip 1em plus 0.5em minus
  0.4em\relax New York: Kluwer Academic/Plenum Publishers, 2002.

\bibitem{YeuLi06}
R.~W. Yeung, S.-Y.~R. Li, N.~Cai, and Z.~Zhang, \emph{Network Coding Theory},
  ser. Foundations and Trends in Communications and Information Theory.\hskip
  1em plus 0.5em minus 0.4em\relax Now Publishers, 2006.

\bibitem{Chan.Grant07dualities}
\BIBentryALTinterwordspacing
T.~Chan and A.~Grant, ``Dualities between entropy functions and network
  codes,'' submitted to IEEE Trans. Inform. Theory. [Online]. Available:
  \url{http://arxiv.org/abs/0708.4328v1}
\BIBentrySTDinterwordspacing

\bibitem{Mat07}
F.~Matus, ``Infinitely many information inequalities,'' in \emph{IEEE Int.
  Symp. Inform. Theory}, Nice, France, 2007, pp. 24--29.

\bibitem{ChaGra08}
T.~H. Chan and A.~Grant, ``Capacity bounds for secure network coding,'' 2008,
  submitted to Australian Communications Theory Workshop.

\bibitem{cai2002snc}
N.~Cai and R.~Yeung, ``Secure network coding,'' in \emph{IEEE Int. Symp.
  Inform. Theory}, 2002.

\bibitem{CaiYeu07}
N.~Cai and R.~W. Yeung, ``A security condition for multi-source linear network
  coding,'' in \emph{IEEE Int. Symp. Inform. Theory}, 2007.

\bibitem{feldman2004csn}
J.~Feldman, T.~Malkin, C.~Stein, and R.~Servedio, ``On the capacity of secure
  network coding,'' in \emph{42nd Annual Allerton Conference on Communication,
  Control, and Computing}, 2004.

\bibitem{bhattad2005wsn}
K.~Bhattad and K.~Narayanan, ``Weakly secure network coding,'' in
  \emph{Workshop on Network Coding, Theory, and Applications}, 2005.

\end{thebibliography}


\end{document}